
\documentclass[showpacs,superscriptaddress,prd,twocolumn]{revtex4}
\usepackage{amsmath,amssymb,amsfonts,amsthm}
\usepackage{wasysym}

\newtheorem{theorem}{Theorem}[section]


\newcommand{\Si}{\mathcal{R}}

\newcommand{\Sip}{\mathcal{R}'}


\newcommand{\ra}{\mathcal{R}_{SY}}

\newcommand{\ram}{\mathcal{R}_{OM}}


\newcommand{\dv}{dv}

\newcommand{\dvf}{dv_0}




\newcommand{\norm}{\lambda}

\newcommand{\dom}{U}

\begin{document}
\title{Inequality between size and  angular momentum for bodies}
\author{Sergio Dain}
\email[E-mail: ]{dain@famaf.unc.edu.ar}
\affiliation{Facultad de Matem\'atica, Astronom\'{\i}a y
  F\'{i}sica, FaMAF, Universidad
  Nacional de C\'ordoba,\\
  Instituto de F\'{\i}sica Enrique Gaviola, IFEG, CONICET,\\
  Ciudad Universitaria (5000) C\'ordoba, Argentina}

\begin{abstract}
  A universal inequality that bounds the angular momentum of a body by the
  square of its size is presented and heuristic physical arguments are given to
  support it. We prove a version of this inequality, as consequence of Einstein
  equations, for the case of rotating axially symmetric, constant density,
  bodies.  Finally, the physical relevance of this result is discussed.
\end{abstract}

\date{\today}

\pacs{04.40Dg. 04.70.Bw, 04.20.Ex, 02.40.Ky}
\maketitle

\emph{Introduction} --- Consider a rotating body $\dom$ with angular momentum
$J(\dom)$. Let $\Si(\dom)$ be a measure (with units of length) of the size
of the body.  A precise definition for the radius $\Si$ will be given later on,
for the present discussion it is enough to consider only the intuitive idea of
size: for example, if the body is a sphere in flat space then $\Si$ should
be proportional to the radius of the sphere.

We conjecture that there exists a universal inequality for all bodies of the
form
\begin{equation}
  \label{eq:22}
  \Si^2(\dom) \apprge  \frac{G}{c^3} |J(\dom)|,
\end{equation}
where $G$ is the gravitational constant and $c$ the speed of light. The symbol
$\apprge$ is intended as an order of magnitude, the precise universal
(i.e. independent of the body) constant will depend, of course, on the
definition of $\Si$.

The purpose of the first part of this article is to provide physical arguments
supporting the validity of this inequality.  In the second part we prove, as
consequence of Einstein field equations, theorem \ref{t:1}. This theorem
provides a precise version of the inequality \eqref{eq:22} valid for rotating,
axially symmetric, constant density, bodies.  Finally, we conclude with a
discussion of the physical relevance of this result.

\emph{Heuristic arguments}---
The arguments in support of the inequality (\ref{eq:22}) are based in the
following three physical principles:
 \begin{itemize}
\item[(i)] The speed of light $c$ is the maximum speed.

\item[(ii)] For bodies which are not contained in a black hole the following
  inequality holds
  \begin{equation}
    \label{eq:2}
    \Si(\dom) \apprge\frac{G}{c^2}  m(\dom),
  \end{equation}
  where $m(\dom)$ is the mass of the body.

\item[(iii)] The inequality (\ref{eq:22})   holds for black holes.
\end{itemize}
Let us discuss these assumptions. Item (i) is clear. Item (ii) is called the
\emph{trapped surface conjecture} \cite{Seifert79}.  Essentially, it says that
if the reverse inequality as in (\ref{eq:2}) holds then a trapped surface
should enclose $\dom$. That is: if matter is enclosed in a sufficiently small
region, then the system should collapse to a black hole.  This is related with
the \emph{hoop conjecture} \cite{thorne72} (see also \cite{Wald99}
\cite{PhysRevD.44.2409} \cite{Malec:1992ap} ).  The trapped surface
conjecture has been proved in spherical symmetry \cite{Bizon:1989xm}
\cite{Bizon:1988vv} \cite{Khuri:2009dt} and  also for  a relevant class
of non-spherical initial data \cite{Malec:1991nf}. The general case remains
open but it is expected that some version of this conjecture should hold.

Concerning item (iii),  the inequality
\begin{equation}
  \label{eq:5}
  A\geq8\pi\frac{G}{c^3} |J|
\end{equation}
was recently proved for axially symmetric black holes (see \cite{dain12} and
reference therein), where $A$ is the area of the stable marginally trapped surface and $J$ its
angular momentum.  The area $A$ is a measure of the size of a trapped surface,
hence the inequality \eqref{eq:5} represents a version of \eqref{eq:22} for
axially symmetric black holes.  In fact the inequality (\ref{eq:5}) was the
inspiration for the inequality (\ref{eq:22}).  A possible generalization of
(\ref{eq:5}) for bodies is to take  the area $A(\partial
\dom)$ of the boundary $\partial \dom$ of the body $\dom$
as measure of size. But
unfortunately the area of the boundary is not a good measure of the size of a
body in the presence of curvature. In particular, an inequality of the form
$A(\partial \dom) \apprge G c^{-3} |J(\dom)| $ does not hold for
bodies. The counterexample is essentially given by a rotating torus in the
weak field limit, with large major radius and small minor radius.  The details
of this calculation will be presented elsewhere \cite{Anglada13}.

It is important to emphasize that principles (i) and (iii) have a different status than principle (ii).
The former are well established facts, the later is a conjecture.
Assuming (i), (ii) and (iii) we want to argue that \eqref{eq:22} should hold.
Consider, in Newton theory, an axially symmetric  body $\dom$ with
 mass density $\bar \mu $, rotating around the axis of symmetry with angular
velocity $\omega$. These functions are not required to be constant on $\dom$.
The angular momentum and the total mass of the body are given by
\begin{equation}
  \label{eq:29}
  J(U)=  \int_\dom \bar \mu \omega \rho^2 \, \dvf, \quad  m(U)=\int_U \bar \mu \dvf,
\end{equation}
where $\rho$ is the euclidean distance to the axis and $\dvf$ is the euclidean
volume element.
The angular velocity is bounded by
\begin{equation}
  \label{eq:32}
  |\omega|=\frac{|v|}{\rho}\leq \frac{c}{\rho},
\end{equation}
where we have used the principle (i): $|v| \leq c$, where $v$ is the linear
velocity. Using \eqref{eq:32} in the expression for the angular momentum \eqref{eq:29} we obtain
\begin{equation}
  \label{eq:24}
  |J(U)|\leq c \int_\dom \bar \mu \rho \,\dvf \leq c m(U) \sup_U \rho.
\end{equation}
Note that this inequality is deduced using only the Newtonian expression for the angular momentum
and principle (i).
If the body is contained in a black hole, then the inequality  (\ref{eq:22}) holds
for the black hole boundary according to principle (iii). Hence, we assume that
it is not contained in a black hole, and then, according to principle (ii), the
inequality (\ref{eq:2}) holds. Using this inequality for the mass in
(\ref{eq:24}) we get
\begin{equation}
  \label{eq:34}
\frac{G}{c^3}  J(U)\apprle \Si(\dom) \sup_U \rho.
\end{equation}
A reasonable property for a size measure (at least in flat space) is that
\begin{equation}
  \label{eq:25}
  \sup_U \rho \leq \Si(U).
\end{equation}
Using \eqref{eq:25} in \eqref{eq:34} we obtain \eqref{eq:22}.
Note that even if the property \eqref{eq:25} does not hold,  the right hand side of \eqref{eq:34}  can be interpreted as the square of a
measure of the size of $\dom$ and hence  an inequality of the form
\eqref{eq:22} also holds for that new measure of size.

It is clear that one of the main difficulties in the study of inequalities of
the form \eqref{eq:22} and \eqref{eq:2} is the very definition of the
quantities involved, in particular the measure of size. In fact, despite the
intensive research on the subject, there is no know universal measure of size
such that the trapped surface conjecture (or, more general, the hoop
conjecture) holds (see the interesting discussions in \cite{Malec:1992ap}
\cite{Gibbons:2012ac} \cite{Senovilla:2007dw}). However, as we will see in the
next section, the remarkable point is that in order to find an appropriate
measure of size $\Si$ such that \eqref{eq:22} holds we do not to need to prove
first \eqref{eq:2}, and hence we do not need to find the relevant measure of
mass $m(\dom)$ for the trapped surface conjecture.

The arguments of the previous discussion can be summarized as follows. In order
to increase the angular momentum of a body with fixed size there are two
mechanisms: to increase the angular velocity or to increase the mass inside the
body. But there is a physical limit to both mechanisms. The angular velocity is
bounded by the speed of light, and increasing the mass (at fixed size) will
eventually produce a black hole, where the inequality \eqref{eq:22}
holds. Hence, an universal inequality of the form \eqref{eq:22} is expected for
all bodies.

\emph{A precise version of the inequality} ---  We make precise the three notions
involved in the inequality (\ref{eq:22}): a body $\dom$, the angular momentum
$J$ and the size $\Si$ of the body.  A \emph{body} $\dom$ is a connected open
subset $\dom \subset S$ with smooth boundary $\partial \dom$; where $S$ is
a spacelike 3-surface which gives rise to the initial data set for Einstein
equations defined as follows. An \emph{initial data set} for the Einstein
equations is given by $(S, h_{ij}, K_{ij},\mu, j^i)$ where $S$ is a connected
3-dimensional manifold, $ h_{ij} $ a (positive definite) Riemannian metric,
$ K_{ij}$ a symmetric tensor field, $j^i$  a vector field and $\mu$ a scalar
field on $S$, such that the constraint equations
\begin{align}
 \label{const1}
   D_j   K^{ij} -  D^i   K= -8\pi\frac{G}{c^4} j^i,\\
 \label{const2}
   R -  K_{ij}   K^{ij}+  K^2=16\pi\frac{G}{c^4} \mu,
\end{align}
are satisfied on $S$. Where $ {D}$ and $R$ are the Levi-Civita connection and
the scalar curvature  associated with $ {h}_{ij}$, and $ K = K_{ij} h^{ij}$. In
these equations the indices are moved with the metric $ h_{ij}$ and its inverse
$ h^{ij}$. In terms of the four dimensional energy momentum tensor $T_{\mu\nu}$,
the matter fields are given by $\mu=T_{\mu\nu} n^\mu n^\nu$, $j_\nu=-
h_\nu{}^\lambda T_{\lambda \nu} n^\nu$,
where $n^\nu$ is the timelike unit vector normal to the slice $S$. The relation
between the mass
density $\bar \mu$ used in (\ref{eq:29}) and the energy density
$\mu$ is given  by $\mu=c^2 \bar \mu$.

We require that the matter fields satisfy the dominant energy condition
\begin{equation}
  \label{eq:1}
  \mu\geq \sqrt{j^i j_i}.
\end{equation}
In order to have a proper definition of the angular momentum of the body we
will further assume that the data are \emph{axially symmetric} (in general, the
angular momentum of a bounded region $\dom$ is very difficult to define, see
the review article \cite{Szabados04} and reference therein). That is, we assume
the existence of a Killing vector field $\eta^i$, i.e;
\begin{equation}
  \label{eq:38}
 \pounds_\eta   h_{ij}=0,
\end{equation}
where $\pounds$ denotes the Lie derivative, which has complete periodic orbits
and such that
\begin{equation}
  \label{eq:8b}
  \pounds_\eta \mu  = \pounds_\eta j^j = \pounds_\eta   K_{ij}=0.
 \end{equation}
We denote the  norm of the Killing vector by $\norm= (\eta^i\eta_i)^{1/2}$.
The angular momentum of the body $\dom$ is defined by
\begin{equation}
  \label{eq:4}
  J(\dom)=-\frac{1}{c}\int_{\dom} j_i \eta^i \dv,
\end{equation}
where $\dv$ is the volume measure with respect to the metric $h_{ij}$.

Finally, we should define a notion of size for the body $\dom$. This notion
will be a variant of the following definition of radius presented by Schoen and Yau in
\cite{schoen83d}. Let $\Gamma$ be a simple closed curve
in $\dom$ which bounds a disk in $\dom$. Let $p$ be largest constant such
that the set of points within a distance $p$ of $\Gamma$ is contained within
$\dom$ and forms a proper torus. Then $p$ is a measure of the size of
$\dom$ with respect to the curve $\Gamma$. The radius $\ra(\dom)$ is
defined as the largest value of $p$ we can find by considering all curves
$\Gamma$. That is, $\ra(\dom)$ is expressed in terms of the largest torus
that can be embedded in $\dom$.  Using this definition, the following deep
theorem was proved in \cite{schoen83d}. Let $\dom$ be any subset of $S$.
Assume that the scalar curvature $R$ of the metric $h_{ij}$ is bounded from
below $R\geq \Lambda$ in $\dom$ by a positive constant $\Lambda$. Then the
following inequality holds
\begin{equation}
  \label{eq:49}
  \Lambda \leq  \frac{8\pi^2}{3}\frac{1}{ \ra^{2}}.
\end{equation}
Note that this is a purely local and purely Riemannian result.  There is no
requirement that $S$ be asymptotically flat and only assumptions on the metric
$h_{ij}$ are made.

In \cite{Murchadha86b} \'O Murchadha made the following important
observation. Define another radius $\ram(\dom)$ as follows. Let
$\ram(\dom)$ be the size of the largest stable minimal 2-surface that can be
imbedded in $\dom$, where size of the surface is the distance (with respect
to the ambient metric $h_{ij}$) from the boundary to that internal point which
is furthest from the boundary. Then, it can be proved that
\begin{equation}
  \label{eq:9}
  \ram(\dom) \geq \ra (\dom),
\end{equation}
and also that the same bound (\ref{eq:49}) holds for $\ram(\dom)$, under
similar assumptions\footnote{It was pointed out in \cite{Galloway:2008gc} that
  to prove this bound with the radius $\ram$ an additional requirement is
  needed:  the boundary $\partial \dom$ should be mean convex.}. Namely,
\begin{equation}
  \label{eq:49b}
  \Lambda \leq  \frac{8\pi^2}{3}\frac{1}{ \ram^{2}}.
\end{equation}
Since we have \eqref{eq:9}, the right hand side of \eqref{eq:49b} is smaller
than the right hand side of \eqref{eq:49}, and hence $\ram$ provide a better
bound.

To have an intuitive idea of these measures, let us compute them for some
relevant domains in flat space. Recall that the planes are minimal stable
surfaces in flat space. For a sphere of radius $b$ we have that
$\ra=b/2$, $\ram=b$.  We see that both radii give essentially the same desired
value for the sphere.  For a torus with major radius $b$ and minor radius $a$
we have $\ra=a/2$, $\ram=a$.  Both radii are independent of the major radius
$b$ for the torus. Hence, we can not expect an inequality of the form
\eqref{eq:22} for $\ra$ or $\ram$, since in the weak field limit a torus of
large radius $b$ and small radius $a$ will have large angular momentum $J$ and
small $\ra$ or $\ram$ (a similar counter example as in the case of the area
discussed above).  Finally, to see the relevant difference between $\ra$ and
$\ram$ consider a cylinder with radius $a$ and height $L$.  We have
$\ra=\min\{a/2,L/2\}$, $\ram=a$. When $L>a$, then both radius gives similar
values, however for a thin disk with $L<a$ we have $\ra=L/2$ and $\ram=a$.
That is, $\ra\to 0$ as $L \to 0$ while $\ram$ is independent of $L$.

Motivated by the example of the torus, we define a new radius for axially
symmetric bodies as follows. Consider a region $\dom$ with a Killing vector
$\eta^i$ with norm $\lambda$,  we define the radius $\Si$ by
\begin{equation}
  \label{eq:8}
  \Si(\dom) = \frac{2}{\pi}\frac{\left(\int_\dom \norm \, \dv\right)^{1/2}}{\ram(\dom)}.
\end{equation}
This will be our measure for size for the inequality \eqref{eq:22}.
The most natural normalization for $\Si$ in the inequality \eqref{eq:22}
is to require that $\Si=b$ for an sphere in flat space of radius $b$. This is
the reason for  the factor $2/\pi$ in \eqref{eq:8}.

We have also the analog definition with respect to $\ra$, namely
\begin{equation}
  \label{eq:15}
  \Sip(\dom) = \frac{2}{\pi} \frac{\left(\int_\dom \norm \,
      \dv\right)^{1/2}}{\ra(\dom)}.
\end{equation}
Using the inequality \eqref{eq:9}, we obtain
\begin{equation}
  \label{eq:16}
  \Sip(\dom) \geq \Si(\dom).
\end{equation}
That is, from the point of view of the inequality \eqref{eq:22}, the radius
$\Si$ provides a sharper estimate than $\Sip$.

For the torus in flat space, the volume integral of the norm of the Killing
vector is given by
\begin{equation}
  \label{eq:19}
  \int_{\text{Torus}} \rho \, \dvf=2\pi^2 a^2 \left(\frac{a^2}{4}+b^2\right).
\end{equation}
Then we obtain
\begin{equation}
  \label{eq:20}
  \Si=  2^{3/2}\left(\frac{a^2}{4}+b^2\right)^{1/2}, \quad   \Sip= 2\Si.
\end{equation}
The important
point is that in the limit $a\to 0$ we obtain $\Si=2^{3/2}b$, that is, a
torus with a large $b$ has also large size in contrast with the original radii
$\ra$ or $\ram$.
For a thin disk with $L<a$ we have
\begin{equation}
  \label{eq:21}
   \Si=\frac{2^{3/2}}{\sqrt{3\pi}} \sqrt{aL},  \quad \Sip=\frac{2^{5/2}}{\sqrt{3\pi}}
   \frac{a^{3/2}}{L^{1/2}}.
\end{equation}
We see that $\Si\to 0$ and  $\Sip\to \infty$ as $L \to 0$. That is, the
difference between the two measures is significant.

Finally, it is important to compute $\Si$ for a very dense body where the
gravitational field is strong.  Consider a constant density star of total mass
$m$ with area radius equal to Schwarzschild radius $2m G/c^2$. That is, we are
considering the limit case before the formation of a black hole. The radius
$\ram$ for that case was calculated in  \cite{Murchadha86b}. Using that result
we obtain
\begin{equation}
  \label{eq:13}
  \Si=\frac{2^{11/2}}{\pi\sqrt{3}} \frac{G}{c^2} m \approx 8.16  \frac{G}{c^2} m.
\end{equation}
We see that $\Si$ is of the same order of magnitude than the area radius, and
hence it is a reasonable measure of size in that case.

We have the following result.
\begin{theorem}
\label{t:1}
  Let $(S,h_{ij},K_{ij},\mu, j^i)$ be an initial data set that satisfy the energy
  condition \eqref{eq:1}. We assume that the data are maximal (i.e. $K=0$) and
  axially symmetric.  Let
  $\dom$ be an open set in $S$. Assume that
the energy density  $\mu $ is constant on $\dom$.  Then the following
 inequality holds
 \begin{equation}
   \label{eq:7}
 \Sip^2(\dom) \geq  \frac{24}{\pi^3}\frac{G}{c^3} |J(\dom)|.
 \end{equation}
The same bound holds for $\Si(\dom)$ if we assume, in addition, that the
boundary $\partial \dom$ is mean convex.
\end{theorem}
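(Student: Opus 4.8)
The plan is to bound $|J(\dom)|$ from above by a multiple of $\int_\dom \norm\,\dv$, then invoke the Schoen--Yau estimate \eqref{eq:49} to trade the remaining factor for $\ra^{-2}$, at which point the definition \eqref{eq:15} of $\Sip$ assembles the claim. First I would estimate the angular-momentum integrand pointwise. Applying the Cauchy--Schwarz inequality to the vectors $j^i$ and $\eta^i$ gives $|j_i\eta^i|\leq \sqrt{j^ij_i}\,\norm$, and the dominant energy condition \eqref{eq:1} then yields $|j_i\eta^i|\leq \mu\,\norm$. Integrating over $\dom$ and using that $\mu$ is constant there gives
\begin{equation}
  |J(\dom)|\leq \frac{\mu}{c}\int_\dom \norm\,\dv .
\end{equation}
If $\mu=0$ then \eqref{eq:1} forces $j^i=0$, so $J(\dom)=0$ and the inequality is trivial; hence I may assume $\mu>0$.

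Next I would convert the constant energy density into a lower bound on the scalar curvature. Since the data are maximal, $K=0$, the Hamiltonian constraint \eqref{const2} reduces to $R=16\pi Gc^{-4}\mu+K_{ij}K^{ij}$, and discarding the nonnegative term $K_{ij}K^{ij}$ gives $R\geq\Lambda$ with the positive constant $\Lambda:=16\pi Gc^{-4}\mu$. This is precisely the hypothesis of the Schoen--Yau bound, so \eqref{eq:49} applies and yields $\Lambda\leq\tfrac{8\pi^2}{3}\ra^{-2}$, that is,
\begin{equation}
  \mu\leq \frac{\pi}{6}\frac{c^4}{G}\frac{1}{\ra^2} .
\end{equation}

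Finally I would combine the two estimates. Inserting the bound on $\mu$ into the bound on $|J(\dom)|$ gives $\tfrac{G}{c^3}|J(\dom)|\leq \tfrac{\pi}{6}\,\ra^{-2}\int_\dom\norm\,\dv$, and recognizing from \eqref{eq:15} that $\ra^{-2}\int_\dom\norm\,\dv=\tfrac{\pi^2}{4}\Sip^2$ turns this directly into $\Sip^2\geq\tfrac{24}{\pi^3}\,\tfrac{G}{c^3}|J(\dom)|$. The statement for $\Si$ follows by the identical argument with \eqref{eq:49b} replacing \eqref{eq:49}; this is exactly where the extra mean-convexity hypothesis on $\partial\dom$ is needed, since that is the condition under which the $\ram$ form of the bound is known to hold. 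The argument is elementary once the Schoen--Yau estimate is granted, so I do not expect a genuine obstacle in the proof itself: the only real content beyond bookkeeping is the observation in the second step that the dominant energy condition together with maximality converts the matter angular-momentum problem into a scalar-curvature lower bound, which is what singles out the Schoen--Yau radius as the right notion of size. The true difficulty therefore lies upstream, in having isolated the correct size measure \eqref{eq:8}, \eqref{eq:15}; given that choice the estimate closes cleanly.
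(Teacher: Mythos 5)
Your proposal is correct and follows essentially the same route as the paper's own proof: Cauchy--Schwarz plus the dominant energy condition to bound the angular-momentum integrand by $\mu\,\norm$, the maximal constraint to get $R\geq 16\pi G c^{-4}\mu$, and the Schoen--Yau estimate to bound $\mu$ by $\ra^{-2}$ (respectively $\ram^{-2}$ under mean convexity). Your explicit treatment of the degenerate case $\mu=0$ is a small added care not present in the paper, but the argument is otherwise identical.
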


\begin{proof}
The angular momentum of the body $\dom$ is given by
(\ref{eq:4}). Define the unit vector $\hat \eta^i$ by
\begin{equation}
  \label{eq:14}
  \hat \eta ^i=\frac{\eta^i}{\norm}.
\end{equation}
Then we have
\begin{align}
  |J(\dom)|\leq \frac{1}{c} \int_\dom |j^i\eta_i| \, dv &=\frac{1}{c} \int_\dom
  |j^i\hat\eta_i|\,
  \norm \, dv \label{eq:bj1} \\
&\leq \frac{1}{c} \int_\dom \sqrt{j^ij_i}\, \norm\, dv  \label{eq:bj2}\\
&\leq \frac{1}{c}\int_\dom \mu \, \norm \, dv  ,\label{eq:bj3}
\end{align}
where in the line \eqref{eq:bj2} we have used that $\hat \eta^i$ has unit norm,
in the line \eqref{eq:bj3} we used the energy condition (\ref{eq:1}).

We have assumed that the data are maximal and hence, by equation
(\ref{const2}), we obtain
\begin{equation}
  \label{eq:23}
  R\geq 16\pi\frac{G}{c^4} \mu.
\end{equation}
Since we have assumed that $\mu$ is constant (which should be positive by the
energy condition  (\ref{eq:1})) on $\dom$, we can take
$\Lambda=16\pi G c^{-4}\mu$ and then we are under the hypothesis of the Schoen-Yau theorem.
That is, the bound (\ref{eq:49}) holds, and hence we get
\begin{equation}
  \label{eq:17}
  \mu \leq \frac{\pi}{6}\frac{c^4}{G}\frac{1}{ \ra^{2}}.
\end{equation}
Using this bound in (\ref{eq:bj3}) we obtain
\begin{equation}
  \label{eq:6}
  |J(\dom)| \leq \frac{\pi}{6} \frac{c^3}{G}\frac{1}{ \ram^{2}}\int_\dom \norm \, \dv
  = \frac{\pi^3}{24} \frac{c^3}{G} \Sip^2,
\end{equation}
where in the last equality we have used the definition (\ref{eq:15}). Under the
additional assumption that the boundary $\partial \dom$ is mean convex, we
have the same bound (\ref{eq:17}) for the radius $\ram$, and hence the same
inequality (\ref{eq:6}) holds for $\Si$.
\end{proof}

It is interesting to note that this proof is very similar to the heuristic
argument presented above. There is a physical reason for this similarity: in
axial symmetry  the gravitational waves have no angular
momentum. All the angular momentum is contained in the matter sources. Hence
the Newtonian expression for the angular momentum \eqref{eq:29} is similar to
relativistic one \eqref{eq:4}. Condition (i) on the maximum velocity of the
matter is expressed in the dominant energy condition \eqref{eq:1}. Moreover,
from inequality (\ref{eq:bj3}) (without using the assumption that $\mu$ is
constant), we get the analog of the inequality (\ref{eq:24}), namely
\begin{equation}
  \label{eq:3}
  |J(U)|\leq c m(U) \sup_U \lambda,
\end{equation}
where we have defined
\begin{equation}
  \label{eq:11}
  m(U)=\frac{1}{c^2}\int_U \mu \dv.
\end{equation}
Note that the length of the azimuthal circles is given by $2\pi \lambda$,
hence $\lambda$ represents a natural generalization for curved spaces of the
coordinate $\rho$ that appears in (\ref{eq:24}).

The important new ingredient is that instead of using the bound \eqref{eq:2}
for the mass of the body, we use the Schoen-Yau bound for the energy density
\eqref{eq:49}. This allow us to bypass the hoop conjecture and its associated
definition of size and mass.

Note that the radius used in the theorem can not be
applied in general to black holes, since it requires a regular interior region.
And even when the interior is regular the radius is not a priori related with the black hole area.
A relevant open problem is to find a suitable measure of size that can be applied for both
 black holes and bodies.

\emph{Physical relevance} ---
It is important to emphasize that the validity of inequality \eqref{eq:22} is
entirely independent of any specific matter model, the only requirement is that
the dominant energy condition is satisfied.

The inequality \eqref{eq:22} is a prediction of Einstein theory and hence it
should be contrasted with observational evidences. In order to violate this
inequality a body should be small and highly spinning, a natural candidate for
that is a neutron star. For the fastest rotating neutron star found to date
(see \cite{Hessels:2006ze}) we have
\begin{equation}
  \label{eq:66}
  \omega \approx 4.5 \times 10^3 \, rad\, s^{-1}.
\end{equation}
Assuming that the neutron star has about three solar masses (which appears to
be a reasonable upper bound for the mass, see \cite{Lattimer:2004pg}) we obtain
\begin{equation}
  \label{eq:67}
  m \omega\approx 2.7 \times 10^{37}\, s^{-1} g.
\end{equation}
The  radius of the neutron star is typically
\begin{equation}
  \label{eq:12}
  \Si \approx 1.2 \times 10^6 \, cm.
\end{equation}
Assuming that the star is spherical with constant density we get that the
angular momentum is given by
\begin{equation}
  \label{eq:10}
 \frac{G}{c^3} |J| = \frac{G}{c^3}\frac{2}{5}m \Si^2 \omega \approx 3.8 \times
 10^{10} \, cm^2.
\end{equation}
This should be compared with the square of the  radius
\begin{equation}
  \label{eq:18}
  \Si^2 \approx 1.44 \times 10^{12} \, cm^2.
\end{equation}
We see that the inequality (\ref{eq:22}) is satisfied.

Finally, it is also interesting to consider what kind of limit the inequality
\eqref{eq:22} impose on elementary particles.  From quantum mechanics we get that  the
angular momentum of an elementary particle is given by
\begin{equation}
  \label{eq:13b}
  J=\sqrt{s(s+1)}\hbar, \quad   \hbar  =1.05 \times 10^{-27} \, cm^2  s^{-1} g,
\end{equation}
where $s$ is the spin of the particle. Using this expression in \eqref{eq:22}
we obtain that the classical theory impose the following  minimal size
for a particle with spin $s$
\begin{equation}
  \label{eq:22b}
 \Si_0 = (s(s+1))^{1/4} l_p, \quad l_p = \left( \frac{G \hbar}{c^3}\right)^{1/2},
\end{equation}
where $l_p=1.6\times 10^{-33}\, cm$ is the Planck length.  We recover the
Planck length essentially because the order of magnitude of the universal
constant in the inequality \eqref{eq:22} is one. It appears to be a remarkable
self consistence of the Einstein field equations that they predict a minimum
length of the order of magnitude of the Planck length if we assume that there
exists a minimum for the angular momentum given by quantum mechanics.

\begin{acknowledgments}
  It is a pleasure to thank E. Gallo, G. Galloway, R. J. Gleiser, N. \'O
  Murchadha, O. Ortiz, M. Reiris, R. Wald, for illuminating discussions.  This
  work was supported in by grant PICT-2010-1387 of CONICET (Argentina) and
  grant Secyt-UNC (Argentina).
\end{acknowledgments}


\end{document}